\newtheorem*{theorem*}{Theorem}
\newtheorem{theorem}{Theorem}[section]
\newtheorem*{proposition*}{Proposition}
\newcommand{\ignore}[1]{}
\newcommand{\enote}[1]{} \newcommand{\knote}[1]{}
\newcommand{\rnote}[1]{}
\renewcommand{\P}[1]{{\mathbb{P}}\left[{#1}\right]}
\newcommand{\E}[1]{{\mathbb{E}}\left[{#1}\right]}
 \newcommand{\R}{\mathbb R}
\renewcommand{\phi}{\varphi}
\newcommand{\copyableTheorem}[2]{
 \newtheorem*{newthm#1}{Theorem \ref{thm#1}}
 \begin{newthm#1}
   {#2}
 \end{newthm#1}
 \expandafter\newcommand\expandafter{\csname thm#1\endcsname}{
   \begin{theorem}
     \label{thm#1}
     {#2}
   \end{theorem}
 }
}
\newcommand{\uu}[1]{\mathbb{U}\left[{#1}\right]}
\newcommand{\uup}[2]{\mathbb{U}_{#1}\left[{#2}\right]}
\newcommand{\geo}[1]{\mathbb{G}\left[{#1}\right]}
\begin{document}
\title{A lower bound on seller revenue in single buyer monopoly
  auctions}

\author{Omer Tamuz\footnote{Weizmann Institute, Rehovot 76100, Israel}}
\maketitle  

\begin{abstract}
  We consider a monopoly seller who optimally auctions a single object
  to a single potential buyer, with a known distribution of
  valuations. We show that a tight lower bound on the seller's
  expected revenue is $1/e$ times the geometric expectation of the
  buyer's valuation, and that this bound is uniquely achieved for the
  equal revenue distribution. We show also that when the valuation's
  expectation and geometric expectation are close, then the seller's
  expected revenue is close to the expected valuation.
\end{abstract}

\section{Introduction}
Consider a monopoly seller, selling a single object to a single
potential buyer. We assume that the buyer has a valuation for the
object which is unknown to the seller, and that the seller's
uncertainty is quantified by a probability distribution, from which it
believes the buyer picks its valuation.

Assuming that the seller wishes to maximize its expected revenue,
Myerson~\cite{myerson1981optimal} shows that the optimal incentive
compatible mechanism involves a simple one-time offer: the seller
(optimally) chooses a price and offers the buyer to buy the object for
this price; the assumption is that the buyer accepts the offer if its
valuation exceeds this price.  Myerson's seminal paper has become a
classical result in auction theory, with numerous follow-up studies. A
survey of this literature is beyond the scope of this paper (see,
e.g.,~\cite{krishna2009auction,klemperer1999auction}).

The expected seller revenue is an important, simple intrinsic
characteristic of the valuation distribution.  A natural question is
its relation with various other properties of the distribution. For
example, can seller revenue be bounded given such characterizations of
the valuation as its expectation, entropy, etc.?  An immediate upper
bound on seller revenue is the buyer's expected valuation. In fact,
the seller can extract the buyer's expected valuation only if the
seller knows the buyer's valuation exactly - i.e., the distribution
over valuations is a point mass.

Lower bounds on seller revenue are important in the study of
approximations to Myerson auctions (see., e.g., Hartline and
Roughgarden~\cite{hartline2009simple}, Daskalakis and
Pierrakos~\cite{daskalakis2011simple}). A general lower bound on the
seller's revenue is known when the distribution of the buyer's
valuation has a monotone hazard rate; in this case, the seller's
expected revenue is at least $1/e$ times the expected valuation (see
Hartline, Mirrokni and Sundararajan~\cite{hartline2008optimal}, as
well as Dhangwatnotai, Roughgarden and
Yan~\cite{dhangwatnotai2010revenue}).

This bound does not hold in general: as an extreme example, the equal
revenue distribution discussed below has infinite expectation but
finite seller revenue.  The family of monotone hazard rate
distributions does not include many important distributions such the
Pareto distribution or other power law distributions, or in fact any
distribution that doesn't have a very thin tail, vanishing at least
exponentially. The above mentioned lower bound for monotone hazard
rate distributions does not apply to these distributions, and indeed
it seems that the literature lacks any similar, general lower bounds
on seller revenue.

The {\em geometric expectation} of a positive random variable $X$ is
$\geo{X} = \exp(\E{\log X})$ (see,
e.g.,~\cite{paolella2006fundamental}).  We show that a general lower
bound on the seller's expected revenue is $1/e$ times the geometric
expectation of the valuation.  Equivalently, the (natural) logarithm
of the expected seller revenue is greater than or equal to the
expectation of the logarithm of the valuation, minus one.  This bound
holds for any distribution of positive valuations. Notably, the {\em
  regularity} condition, which often appears in the context of Myerson
auctions, is not required here. This result is a new and perhaps
unexpected connection between two natural properties of distributions:
the geometric expectation and expected seller revenue.

We show that this bound is tight in the following sense: for a fixed
value of the geometric mean, there is a unique cumulative distribution
function (CDF) of the buyer's valuation for which the bound is
achieved; this distribution is the equal revenue distribution, with
CDF of the form $F(v) = 1-c/v$ for $v \geq c$. This distribution is
``special'' in the context of single buyer Myerson auctions, as it is
the only one where seller revenue is identical for all prices.

The ratio between expected valuation and expected seller revenue is a
natural measure of the uncertainty of the valuation
distribution. Also, the discrepancy between the geometric expectation
and the (arithmetic) expectation of a positive random variable is a
well known measure of its dispersion. Hence, when the ratio between
the expectations is close to one, one would expect the amount of
uncertainty to be low and therefore seller revenue to be close to the
expected valuation. We show that this is indeed the case: when the
buyer's valuation has finite expectation, and the geometric
expectation is within a factor of $1-\delta$ of the expectation, then
seller revenue is within a factor of $1-2^{4/3}\delta^{1/3}$ of the
expected valuation. Similarly, it is easy to show that when the
variance of the valuation approaches zero then seller revenue also
approaches the expected valuation.

\section{Definitions and results}
We consider a seller who wishes to sell a single object to a single
potential buyer. The buyer has a valuation $V$ for the object which is
picked from a distribution with CDF $F$, i.e.  $F(v) = \P{V \leq v}$.

We assume that $V$ is positive, so that $\P{V \leq 0} = 0$ or $F(0) =
0$. We otherwise make no assumptions on the distribution of $V$; it
may be atomic or non-atomic, have or not have an expectation, etc.

The seller offers the object to the buyer for a fixed price $p$. The
buyer accepts the offer if $p < V$, in which case the seller's revenue
is $p$. Otherwise, i.e., if $p \geq V$, then the seller's revenue is
0. Thus, the seller's expected revenue for price $p$, which we denote
by $\uup{p}{V}$, is given by
\begin{align}
  \label{eq:u-p}
  \uup{p}{V} = p\P{p < V} = p(1-F(p)).
\end{align}
We define
\begin{align}
  \label{eq:u-s-max}
  \uu{V} = \sup_p\uup{p}{V} = \sup_p p(1-F(p)).
\end{align}
When this supremum is achieved for some price $p$ then $\uu{V}$ is the
seller's maximal expected revenue, achieved in the optimal Myerson
auction with price $p$.

We define the {\em geometric expectation} (see,
e.g.,~\cite{paolella2006fundamental}) of a positive real random
variable $X$ by $\geo{X} = \exp\left(\E{\log X}\right)$.  Note that
$\geo{X} \leq \E{X}$ by Jensen's inequality, and that equality is
achieved only for point mass distributions, i.e., when the buyer's
valuation is some fixed number. Note that likewise $\uu{V} \leq
\E{V}$, again with equality only for point mass distributions.

The equal revenue distribution
with parameter $c$  has the following CDF:
\begin{align}
  \label{eq:tight}
  \Phi_c(p) = \begin{cases}0&p \leq c\\ 1-\frac{c}{p}& p>c\end{cases}.
\end{align}
It is called ``equal revenue'' because if $V_c$ has CDF $\Phi_c$ then
$\uup{p}{V_c} = \uu{V_c}$ for all $p \geq c$.

Our main result is the following theorem.
\begin{theorem}\label{thmGeometricLowerBound}
  Let $V$ be a positive random variable. Then $\uu{V} \geq
  \frac{1}{e}\geo{V}$, with equality if and only if $V$ has the equal
  revenue CDF $\Phi_c$ with $c = \uu{V}$.
\end{theorem}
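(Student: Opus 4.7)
Let $U=\uu{V}$. If $U=\infty$ the bound is vacuous, so the interesting case is $U<\infty$. The plan is to turn the defining inequality $p(1-F(p))\le U$ (which gives the tail bound $\P{V>p}\le \min(1,U/p)$) into an upper bound on $\E{\log V}$ by writing this expectation as an integral of the tail/CDF against $dv/v$.

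First I would establish the identity
\begin{align*}
  \E{\log V} = \log U + \int_U^\infty \frac{\P{V>v}}{v}\,dv - \int_0^U \frac{\P{V\le v}}{v}\,dv,
\end{align*}
valid for any positive random variable $V$ and any constant $U>0$. This comes from writing $\log V - \log U = \int_U^V \frac{dv}{v}$ (with opposite sign when $V<U$), splitting over the events $\{V>U\}$ and $\{V\le U\}$, and applying Tonelli's theorem. It is legitimate whenever the resulting integrals make sense in $[-\infty,\infty]$, which covers all our cases.

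Next I would plug in the bound $\P{V>v}\le U/v$ (from the definition of $U$) on $(U,\infty)$, and the trivial bound $\P{V\le v}\ge 0$ on $(0,U)$. This gives
\begin{align*}
  \E{\log V} \;\le\; \log U + \int_U^\infty \frac{U}{v^2}\,dv \;=\; \log U + 1 \;=\; \log(eU),
\end{align*}
so exponentiating yields $\geo{V}\le e\uu{V}$, which is the desired inequality. The boundary cases ($\E{\log V}=\pm\infty$) are easy to handle separately: if $\geo{V}=0$ the bound is trivial, and if $\geo{V}=\infty$ the computation above forces $U=\infty$ as well.

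For the equality case, I would trace back the two inequalities. Equality in the integral over $(U,\infty)$ forces $\P{V>v}=U/v$ for a.e.\ $v>U$, and by right-continuity of $1-F$ this must hold for every $v\ge U$; equality in the integral over $(0,U)$ forces $F(v)=0$ for a.e.\ $v<U$, and by monotonicity of $F$ for every $v<U$. Together these say that $F=\Phi_U$, the equal revenue CDF with parameter $c=U=\uu{V}$, and conversely a direct computation shows this $F$ achieves equality. I expect the only real technical obstacle to be stating the Fubini-based identity cleanly enough to justify the integrability and the two equality steps; the actual bound is essentially a one-line calculation once this is in place.
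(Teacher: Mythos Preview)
Your argument is correct. It differs from the paper's, which instead integrates the pointwise inequality $\log\uu{V}\ge\log p+\log(1-F(p))$ directly against $dF(p)$: the left side is the constant $\log\uu{V}$, the first right-hand term becomes $\E{\log V}$, and the second becomes $\E{\log(1-F(V))}=-1$ via the probability integral transform $F(V)\sim\mathrm{Unif}[0,1]$, yielding $\log\uu{V}\ge\E{\log V}-1$ in one stroke. The paper's route is shorter and in fact establishes the stronger statement that a seller drawing a \emph{random} price $p\sim F$ already earns expected revenue at least $\geo{V}/e$. Your tail-integral decomposition is a bit longer but has two compensating advantages: the equality analysis becomes immediate (the two integrals separately force $\P{V>v}=U/v$ on $(U,\infty)$ and $F\equiv 0$ on $(0,U)$), and atomic $F$ is handled without extra care, whereas the identity $\E{\log(1-F(V))}=-1$ invoked in the paper strictly requires $F$ to be continuous.
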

\begin{proof}
  Let $V$ be a positive random variable with CDF $F$.  By
  Eq.~\ref{eq:u-s-max} we have that
  \begin{align}
    \label{eq:basic}
    \log \uu{V} \geq \log p + \log(1-F(p))
  \end{align}
  for all $p$. We now take the expectation of both sides with respect
  to $p \sim F$:
  \begin{align}
    \label{eq:expectations}
    \int_0^\infty \log \uu{V} dF(p) \geq \int_0^\infty \log p \,dF(p) +
    \int_0^\infty\log(1-F(p)) dF(p).
  \end{align}
  Since $\uu{V}$ is a constant then the l.h.s.\ equals $\log \uu{V}$. The first
  addend on the r.h.s.\ is simply $\E{\log V}$. The second is $\E{\log
    (1-F(V))}$; note that $F(V)$ is distributed uniformly on $[0,1]$,
  and that therefore
  \begin{align*}
    \E{\log (1-F(V))} = \int_0^1\log(1-x)dx = -1.
  \end{align*}
  Hence Eq.~\ref{eq:expectations} becomes:
  \begin{align*}
    \log \uu{V} \geq \E{\log V} - 1,
  \end{align*}
  and
  \begin{align*}
    \uu{V} \geq \frac{1}{e}\exp(\E{\log V}) = \frac{1}{e}\geo{V}.
  \end{align*}

  To see that $\uu{V}=\frac{1}{e}\geo{V}$ only for the equal revenue
  distribution with parameter $\uu{V}$, note that we have equality in
  Eq.~\ref{eq:basic} for all $p$ in the support of $F$ if and only if
  $F=\Phi_c$ for some $c$, and that therefore we have equality in
  Eq.~\ref{eq:expectations} if and only if $F=\Phi_c$ for some
  $c$. Finally, a simple calculation yields that $c=\uu{V}$.
\end{proof}

Note that this proof in fact demonstrates a stronger statement, namely
that the expected revenue is at least $\frac{1}{e}\geo{V}$ for a
seller picking a random price from the distribution of
$V$. Dhangwatnotai, Roughgarden and
Yan~\cite{dhangwatnotai2010revenue} use similar ideas to show lower
bounds on revenue, for valuation distributions with monotone hazard
rates.

We next show that when the geometric expectation approaches the
(arithmetic) expectation then the seller revenue also approaches the
expectation.

\begin{theorem}
  Let $V$ be a positive random variable with finite expectation, and
  let $\geo{V}=(1-\delta)\E{V}$.  Then $\uu{V} \geq
  \left(1-2^{4/3}\delta^{1/3}\right)\E{V}$.
\end{theorem}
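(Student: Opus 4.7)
The plan is to refine the argument of Theorem~\ref{thmGeometricLowerBound} by exploiting that, when $\delta$ is small, $V$ must be tightly concentrated around its mean $M = \E{V}$. First I would normalize by setting $X = V/M$, so that $\E{X} = 1$ and $\E{\log X} = \log(1-\delta)$.

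The key device is the convex nonnegative function $\phi(x) = x - 1 - \log x$, which vanishes only at $x = 1$. Direct computation gives $\E{\phi(X)} = \E{X} - 1 - \E{\log X} = -\log(1-\delta)$, which is small when $\delta$ is small. Since $\phi$ is decreasing on $(0,1]$ and satisfies $\phi(1-\epsilon) \geq \epsilon^2/2$ for every $\epsilon \in (0,1)$ (from the Taylor series $-\log(1-\epsilon) = \epsilon + \epsilon^2/2 + \cdots$), Markov's inequality applied to the nonnegative random variable $\phi(X)$ yields the concentration bound
\begin{align*}
\P{X \leq 1 - \epsilon} \;\leq\; \P{\phi(X) \geq \phi(1-\epsilon)} \;\leq\; \frac{\E{\phi(X)}}{\phi(1-\epsilon)} \;\leq\; \frac{-2\log(1-\delta)}{\epsilon^2}.
\end{align*}

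I would then plug this into the seller's revenue at the price $p = (1-\epsilon)M$:
\begin{align*}
\uu{V} \;\geq\; p\,\P{V > p} \;\geq\; (1-\epsilon)M\left(1 - \frac{-2\log(1-\delta)}{\epsilon^2}\right) \;\geq\; M\left(1 - \epsilon - \frac{-2\log(1-\delta)}{\epsilon^2}\right).
\end{align*}
The standard inequality $-\log(1-\delta) \leq \delta/(1-\delta)$ converts this into the shape $\uu{V}/M \geq 1 - \epsilon - c\delta/\epsilon^2$ for an explicit constant $c$. Optimizing over $\epsilon$ (the minimum of $\epsilon + A/\epsilon^2$ occurs at $\epsilon = (2A)^{1/3}$ with $A = -2\log(1-\delta)$) balances the two error terms and produces a bound of the form $1 - C\delta^{1/3}$. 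The interesting regime is $\delta < 1/16$, since $1 - 2^{4/3}\delta^{1/3}$ is nonpositive otherwise and the claim is vacuous.

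The conceptual content lies in the concentration bound derived from $\phi$. The main obstacle is purely arithmetic: handling $-\log(1-\delta)$ tightly enough in terms of $\delta$ to recover the precise constant $2^{4/3}$ after the optimization. In particular, the clean value $2^{4/3}$ should follow by observing that the above calculation yields a bound of the form $1 - \frac{3}{2^{1/3}(1-\delta)^{1/3}}\delta^{1/3}$, and that $\frac{3}{2^{1/3}(1-\delta)^{1/3}} \leq 2^{4/3}$ throughout the relevant range $\delta \leq 5/32$ (which comfortably contains $\delta < 1/16$).
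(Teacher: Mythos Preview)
Your argument is correct and shares the paper's core idea: normalize to $\E{V}=1$, apply Markov's inequality to the nonnegative variable $\phi(V)=V-1-\log V$ (whose mean is $-\log(1-\delta)$) to show that $V$ is concentrated near $1$, turn this into a revenue lower bound at a price just below $1$, and optimize the free parameter. The difference is in how the Markov step is parametrized. The paper applies Markov at level $-k\log(1-\delta)$, rewrites $\{\phi(V)\geq -k\log(1-\delta)\}$ as $\{Ve^{1-V}\leq (1-\delta)^k\}$, inverts via the Lambert $W$ function, invokes the bound $W(x)\leq -1+\sqrt{2(ex+1)}$, and then optimizes over $k$. You instead apply Markov directly at the threshold $\phi(1-\epsilon)$ and replace the Lambert-$W$ detour by the elementary inequality $\phi(1-\epsilon)\geq \epsilon^2/2$; since the quoted bound on $W$ is precisely this quadratic approximation in disguise, the two computations are equivalent, but yours is more self-contained. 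Both routes balance an $\epsilon$ term against a $\delta/\epsilon^2$ term and arrive at the same constant $2^{4/3}$, and your observation that the claim is vacuous for $\delta\geq 1/16$ while the coefficient check $\frac{3}{2^{1/3}(1-\delta)^{1/3}}\leq 2^{4/3}$ holds for all $\delta\leq 5/32$ closes the argument cleanly.
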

\begin{proof}
  Let $V$ be a positive random variable with finite expectation, and
  denote $1-\delta = \frac{\geo{V}}{\E{V}}$.  We normalize $V$ so that
  $\E{V} = 1$, and prove the claim by showing that $\uu{V} \geq
  1-2^{4/3}\delta^{1/3}$.  
  
  Consider the random variable $V-1-\log V$.  Since $\E{V}=1$, we have
  that $\E{V-1-\log V} = -\log \geo{V} = -\log(1-\delta)$. Since $x -
  1 \geq \log x$ for all $x>0$, then $V-1-\log V$ is
  non-negative. Hence by Markov's inequality
  \begin{align*}
    \P{V-1-\log V \geq -k \log (1-\delta)} \leq \frac{1}{k},
  \end{align*}
  or
  \begin{align}
    \label{eq:concentration}
    \P{Ve^{1-V} \leq (1-\delta)^k} \leq \frac{1}{k}.
  \end{align}
  This inequality is a concentration result, showing that when
  $\delta$ is small then $Ve^{1-V}$ is unlikely to be much less than
  one. However, for our end we require a concentration result on $V$
  rather than on $Ve^{1-V}$; that will enable us to show that the
  seller can sell with high probability for a price close to the
  arithmetic expectation. To this end, we will use the {\em Lambert
    $W$ function}, which is defined at $x$ as the solution of the
  equation $W(x)e^{W(x)} = x$.  We use it to solve the inequality of
  Eq.~\ref{eq:concentration} and arrive at
  \begin{align*}
    \P{V \leq -W\left(-(1-\delta)^k/e\right)} \leq \frac{1}{k},
  \end{align*}
  which is the concentration result we needed: $V$ is unlikely to be
  small when $\delta$ is small.  It follows that by setting the price
  at $-W\left(-(1-\delta)^k/e\right)$, the seller sells with
  probability at least $1-1/k$, and so
  \begin{align*}
    \uu{V} \geq -W\left(-(1-\delta)^k/e\right) \cdot \Big(1-1/k\Big).
  \end{align*}
  Now, an upper bound on $W$ is the
  following~\cite{corless1996lambertw}:
  \begin{align*}
    W(x) \leq -1+\sqrt{2(ex+1)},
  \end{align*}
  and so
  \begin{align*}
    \uu{V} \geq \Big(1-\sqrt{2(1-(1-\delta)^k}\Big) \cdot
    \Big(1-1/k\Big)
    \geq \Big(1-\sqrt{2\delta k}\Big) \cdot
    \Big(1-1/k\Big).
  \end{align*}
  Setting $k=(2\delta)^{-1/3}$ we get
  \begin{align*}
    \uu{V}
    \geq \Big(1-(2\delta)^{1/2} (2\delta)^{-1/6}\Big) \cdot \Big(1-(2\delta)^{1/3}\Big)    \geq 1-2(2\delta)^{1/3}.
  \end{align*}  
\end{proof}

\section{Open questions}
It may very well be possible to show tighter {\em upper} bounds for
$\uu{V}$, using continuous entropy. For example, let $V$ have
expectation $1$ and entropy at least $1$. Then $\uu{V}$ is
at most $1/e$: in fact, it is equal to $1/e$ since, by maximum entropy
arguments, there is only one distribution on $\R^+$ (the exponential
with expectation $1$) that satisfies both conditions, and for this
distribution $\uu{V}=1/e$. 

One could hope that it is likewise possible to prove upper bounds on
$\uu{V}$, given that $V$ has expectation $1$ and entropy at least $h <
1$; intuitively, the entropy constraint should force $V$ to spread
rather than concentrate around its expectation, decreasing the
seller's expected revenue.

\section{Acknowledgments}
We would like to thank Elchanan Mossel for commenting on a preliminary
version of this paper. We owe a debt of gratitude to the anonymous
reviewer who helped us improve the paper significantly through many
helpful suggestions.

This research is supported by ISF grant 1300/08, and by a Google
Europe Fellowship in Social Computing.

\bibliographystyle{elsarticle-num} \bibliography{bundling}

\end{document}